\definecolor{orange}{RGB}{255,140,0}
\definecolor{purple}{RGB}{138,43,226}
\theoremstyle{definition}
\newtheorem{theorem}{Theorem}
\newtheorem{definition}{Definition}
\colorlet{c1}{red} 
\colorlet{c2}{black} 
\colorlet{c3}{black} 
\newcommand{\ProbName}{TSCC}
\newcommand{\AlgName}{ETSCAA}
\newcommand{\AlgStepOne}{VPTS}
\newcommand{\AlgStepTwo}{TQA}
\newcommand{\AlgStepThree}{CTQC}
\newcommand{\IndiOne}{CI}
\newcommand{\IndiTwo}{VLI}
\newcommand{\IndiThree}{SMI}
\begin{document}

\title{Cybersickness-aware Tile-based Adaptive 360\textdegree\ Video Streaming}

\author{
Chiao-Wen Lin$^{\dag}$, Chih-Hang Wang$^{\sharp}$, De-Nian Yang$^{\sharp}$, and Wanjiun Liao$^{\dag}$\\
$^{\dag}$Department of Electrical Engineering, National Taiwan University, Taiwan\\
$^{\sharp}$Institute of Information Science, Academia Sinica, Taiwan\\
E-mail: d09921012@ntu.edu.tw, \{superwch7805, dnyang\}@iis.sinica.edu.tw, and wjliao@ntu.edu.tw\\
}


\maketitle

\begin{abstract}
In contrast to traditional videos, the imaging in virtual reality (VR) is 360\textdegree, and it consumes larger bandwidth to transmit video contents.
To reduce bandwidth consumption, tile-based streaming has been proposed to deliver the focused part of the video, instead of the whole one.
On the other hand, the techniques to alleviate cybersickness, which is akin to motion sickness and happens when using digital displays, have not been jointly explored with the tile selection in VR.
In this paper, we investigate Tile Selection with Cybersickness Control (\ProbName) in an adaptive 360\textdegree\ video streaming system with cybersickness alleviation. We propose an $\boldsymbol{m}$-competitive online algorithm
with Cybersickness Indicator (\IndiOne) and Video Loss Indicator (\IndiTwo) to evaluate instant cybersickness and the total loss of video quality. 
Moreover, the algorithm exploits Sickness Migration Indicator (\IndiThree) to evaluate the cybersickness accumulated over time and the increase of optical flow to improve the tile quality assignment.
Simulations with a real network dataset show that our algorithm outperforms the baselines regarding video quality and cybersickness accumulation.
\end{abstract}


\IEEEpeerreviewmaketitle

\section{Introduction} \label{intro}
Expedited by the Covid-19 pandemic in 2020, much of the world has been transformed into collaboratively operating in a virtual environment \cite{Forbes1}. It markedly accelerates the growth of virtual reality (VR) by necessity.
For example, Facebook has launched Facebook Horizon to allow people to engage in a variety of activities like riding roller coasters with different themes. Users can play fast-paced shooting and car racing games with SteamVR.\footnote{Facebook Horizon: https://www.oculus.com/facebook-horizon/, SteamVR: https://store.steampowered.com/app/250820/SteamVR/}
Equipped with VR head-mounted displays (HMDs), such as Oculus Rift, HTC VIVE, and Google Cardboard,\footnote{Oculus Rift: https://www.oculus.com, HTC VIVE: https://www.vive.com, Google Cardboard: https://arvr.google.com/cardboard} a user can arbitrarily choose a view of the 360\textdegree\ video displaying the virtual world by moving her head dynamically. However, the bandwidth demand for streaming a 360\textdegree\ video is an order of magnitude larger than traditional static videos (e.g., movies) \cite{360vid-summary}. 


To reduce bandwidth consumption, 
tile-based streaming is proposed to stream a part of a video (i.e., video tiles) displayed \cite{360vid-ProbDASH,360vid-Flare,360vid-BAS,360vid-TwoTier,360vid-EPASS360}. 
It first cuts a 360\textdegree\ video into multiple tiles spatially and further encodes these tiles into different quality for adaptive streaming. 
Xie \emph{et al.} \cite{360vid-ProbDASH} chose tiles and decided their quality based on viewport distortion, quality variance, and buffer control. Qian \emph{et al.} \cite{360vid-Flare} selected tiles according to the tile ranking and classing to maximize the utility function under the consideration of inter-chunk switch, intra-chunk switch, and stall duration. Xiao \emph{et al.} \cite{360vid-BAS} combined tiles into macro-streaming units and chose quality levels to reduce bandwidth waste. Sun \emph{et al.} \cite{360vid-TwoTier} dynamically adjusted the rate allocation and buffer length based on the network bandwidth statistics and viewport prediction. 
However, these works only focused on bandwidth reduction but did not explore cybersickness during the tile selection and buffer management in VR.

\emph{Cybersickness} is akin to motion sickness, but it happens when using VR HMDs, and the user may have symptoms of headaches, nausea, and so on \cite{Sickness-TempAspect}.
Since these feelings degrade the user's experience with VR, it is vital to examine cybersickness during the tile selection in VR \cite{360vid-summary}. Several crucial factors lead to cybersickness.
1) \emph{User head movement}. Faster head rotations incur more cybersickness \cite{Sickness-DesignGuideline}. When the user rotates head faster, the Motion-to-photon (MTP) delay increases due to the rapid change of frames. MTP delay is the time difference between the user's head movement and the VR frame display (reflecting the user's movement) \cite{360vid-summary}. 
2) \emph{Magnitude of optical flow}. 
Optical flow is the velocity distribution of apparent movements and changes in brightness patterns between two similar images. The fast motions and rapid changes of brightness and color of objects induce a large amount of optical flow. A VR scene with a large magnitude of optical flow triggers cybersickness because it contains massive visual information, amplifying the mismatch between the real world and the VR scene \cite{Sickness-OpticalFlow}. 
3) \emph{Task duration}. 
Longer task duration to use VR induces more cybersickness since cybersickness accumulates over time \cite{Sickness-DesignGuideline,Sickness-TempAspect,Sickness-UserAdaptation}.

To alleviate cybersickness, field of view (FoV) shrinking \cite{Sickness-FoV-1,Sickness-FoV-2} and depth of field (DoF) simulation \cite{Sickness-DoF,Sickness-DesignGuideline} have been demonstrated as two promising approaches. FoV shrinking slightly and temporarily reduces the size of the viewport by masking the outer part of the original video \cite{Sickness-FoV-1,Sickness-FoV-2}. Smaller FoV induces less cybersickness with the side effect of lowering the sense of immersion, since it reduces the amount of visual information received.
On the other hand, DoF simulation further blurs the out-of-focus parts of a video to decrease the magnitude of optical flow when the view changes rapidly \cite{Sickness-DoF,Sickness-DesignGuideline}.
However, these works only verified the effectiveness of FoV shrinking and DoF simulation but did not explore the trade-off between cybersickness and video quality in mobile VR. 



In this paper, we explore Tile Selection with Cybersickness Control (\ProbName) in an adaptive 360\textdegree\ video streaming system. 
Unlike previous research of 360\textdegree\ video streaming ignoring the impact of cybersickness \cite{360vid-BAS,360vid-EPASS360,360vid-ProbDASH,360vid-Flare,360vid-TwoTier}, \ProbName\ minimizes the video quality loss and cybersickness accumulation by jointly fetching tiles, shrinking the user's viewport, and determining the activation of DoF simulation.
\ProbName\ is challenging due to the following new research issues.
1) \emph{Trade-off between cybersickness and video quality}.
A high-quality video contains a large magnitude of optical flow, which induces severe cybersickness \cite{Sickness-OpticalFlow}. Hence, when users feel dizzy, it is necessary to constrain cybersickness by slightly limiting the viewport size with FoV shrinking and blurring out-of-focus tiles with DoF simulation to reduce the magnitude of optical flow. The joint configuration of the quality of video tiles (at different regions), FoV shrinking, and DoF simulation is crucial to VR. 
2) \emph{Cybersickness accumulation}.
As the user keeps enjoying VR without pauses, cybersickness will aggravate over time since the accumulation is mostly faster than the speed of human adaptation to VR \cite{Sickness-TempAspect,Sickness-UserAdaptation}. However, activating FoV shrinking and DoF simulation for a long period will dramatically decrease the user's experience with the video. Therefore, it is crucial to decide the activation and stop time of FoV shrinking and DoF simulation carefully. 
3) \emph{Packet queue control}.
{\color{c3}High-quality tiles consume much bandwidth and lead to longer transmission time, which may cause video stalls (i.e., the packet queue underflows). Most research \cite{360vid-ProbDASH,360vid-TwoTier} controls the packet queue based on only the network bandwidth and aims to maximize video quality. However, it is necessary to examine user behaviors (e.g., head rotation) when controlling the packet queue in VR since they may induce higher cybersickness (which needs to be compensated by lowering video quality).
Therefore, the packet queue management becomes much more complicated to avoid the overflow and underflow of the queue.
}

To address the above issues, we design an $m$-competitive\footnote{$m$ is the ratio of number of tiles chosen by \AlgName\ to the optimal policy.} online algorithm, named Effective Tile Selection and Cybersickness Alleviation Algorithm (\AlgName).
\AlgName\ introduces Cybersickness Indicator (CI) and Video Loss Indicator (VLI) to evaluate the instant cybersickness and the total loss of video quality by carefully assessing the impacts of FoV shrinking and DoF simulation at different tiles. {\color{c3}Equipped with CI and VLI, \AlgName\ assigns tile quality according to the induced costs (including expected viewport distortion and cybersickness accumulation) and the available bandwidth to control the packet queue growth. Moreover, \AlgName\ introduces Sickness Migration Indicator (SMI) to evaluate the cybersickness accumulated over time and the increase of optical flow. \AlgName\ adjusts tile quality to jointly alleviate cybersickness and optimize video quality based on SMI.}


The rest of this paper is organized as follows. 
Section \ref{sys} introduces the 360\textdegree\ video streaming system and Section \ref{pro} formulates \ProbName.
Section \ref{alg} presents the online algorithm \AlgName\ and proves its competitive ratio. Section \ref{sim} shows the simulation results, and Section \ref{con} concludes this paper.

\section{System Model} \label{sys}

Following \cite{360vid-summary,360vid-BAS,360vid-EPASS360}, a 360\textdegree\ video is first segmented into small chunks for storage. 
Next, each chunk is cropped into tiles spatially, and each tile is encoded into multiple bitrates by the video compression, such as H.264 or H.265, and stored in a mobile proxy server \cite{360vid-summary}. 
When a user request arrives, the server will combine the fetched tiles in the cache to render the required video frames, where the cached tiles were chosen from the server based on the user viewport prediction \cite{360vid-summary}. 

To alleviate cybersickness, the system is equipped with FoV shrinking \cite{Sickness-FoV-1,Sickness-FoV-2} and DoF simulation \cite{Sickness-DoF,Sickness-DesignGuideline}.
After combining the tiles to form a complete video, the requested view is extracted from the video.
The bokeh blur filter \cite{Sickness-DoF} is then applied to the viewport to create the DoF simulation, followed by slightly masking the upper and lower side of the viewport to shrink FoV. 
Afterward, the video quality degradation caused by FoV shrinking and DoF simulation can be evaluated according to structural similarity (SSIM) index and bitrate \cite{SSIM-1,SSIM-2}. 

The system includes a packet queue and a sickness queue maintained on the user side. The packet queue stores the video rendered from the fetched tiles for smooth playback\footnote{To avoid cybersickness caused by MTP delay, it is necessary to maintain the occupancy of packet queue (detailed later) \cite{360vid-ProbDASH,360vid-summary}.} \cite{360vid-ProbDASH,360vid-Flare,360vid-BAS,360vid-TwoTier}, and the queue length increases and decreases when new frames are downloaded and the video is played, respectively.
The sickness queue (derived according to the Time-varying Cybersickness Model \cite{Sickness-TempAspect,Sickness-UserAdaptation,Sickness-OpticalFlow,Sickness-DoF,Sickness-FoV-1,Sickness-FoV-2,Sickness-DesignGuideline} and detailed in Section \ref{pro}) records cybersickness accumulation when the video is played. {\color{c2}According to \cite{Sickness-DesignGuideline,Sickness-TempAspect,Sickness-UserAdaptation}, we measure cybersickness severity by the Simulator Sickness Questionnaire (SSQ) metric, which can be calculated by the magnitude of optical flow and user head rotation speed, monitored by eye and neck movement tracking schemes in VR HMDs \cite{Sickness-DesignGuideline}.
The queue length increases when the magnitude of optical flow and user head rotation speed increase, whereas it constantly decreases due to the user adaptation to VR.
}
{\color{c3}The packet queue and the sickness queue affect each other when the available bandwidth, user behavior, or the amount of video optical flow changes.}

\section{Problem Formulation} \label{pro}
Due to the space constraint, the notation table is provided in \cite{tech}.
We formulate \ProbName\ for the tile selection and cybersickness control in a 360\textdegree\ video streaming system. 
Specifically, the system includes a 360\textdegree\ video divided into multiple chunks with the duration $T$ \cite{360vid-EPASS360}, and each chunk is spatially cropped into $N$ tiles \cite{360vid-EPASS360}, where each tile is encoded into $L$ quality levels \cite{360vid-EPASS360}.
Let $(i,j)$ represent tile $i$ with quality $j$. Each tile $(i,j)$ induces bandwidth consumption $b_{i,j}$, distortion $d_{i,j}$, and the magnitude of optical flow $f_{i,j}$, which are specified in the video metadata \cite{360vid-ProbDASH}.
We describe the system status at time slot $t$ as $S_t=\{R_t,D_t,Q_{t}^P,Q_{t}^S\}$, where $R_t$, $D_t$, $Q_{t}^P$, and $Q_{t}^S$ are the status of user head rotation, expected viewport distortion, packet queue occupancy, and sickness queue occupancy at time slot $t$, respectively, introduced as follows.

1) The status of \textbf{user head rotation} $R_t=(\omega^y_t,\omega^p_t)$ includes the angular speed of yaw rotation $\omega^y_t$ and the angular speed of pitch rotation $\omega^p_t$ \cite{360vid-ProbDASH,360vid-Flare,360vid-BAS,360vid-TwoTier}. 
Let $p_{i,t}$ be the viewing probability of tile $i$ at time slot $t$, which can be derived by the viewport prediction with the stochastic model and historical data of user head rotation \cite{360vid-ProbDASH,360vid-Flare}.

2) The \textbf{expected viewport distortion} \cite{360vid-ProbDASH} $D_t=\frac{\Sigma_{i=1}^N \Sigma_{j=1}^L x_{i,j}p_{i,t}d_{i,j}}{\Sigma_{i=1}^N \Sigma_{j=1}^L x_{i,j}p_{i,t}}$ can be expressed as the average video quality distortion represented by the distortion of each tile $i$ weighted by its viewing probability $p_{i,t}$ at time slot $t$, where $x_{i,j}\in\{0,1\}$ is a decision variable indicating whether tile $(i,j)$ is fetched. The distortion of each tile is derived as the reciprocal of its SSIM index \cite{SSIM-1}, and a tile with higher quality has smaller distortion \cite{SSIM-1,SSIM-2}.

3) {\color{c3}Let $B_t$ be the bandwidth,\footnote{The bandwidth can be allocated by exploiting machine learning models, such as support vector machine and artificial neural network, that learn the relation between bandwidth allocation and quality of service \cite{alloc-1}.
} $C_p$ denotes the packet queue capacity,\footnote{Due to the short-term constraint of viewport prediction, tile-based streaming keeps a small packet queue capacity, whose unit is second, to store predicted viewports in the next few seconds \cite{360vid-ProbDASH,360vid-EPASS360}.} and $\gamma$ is the bandwidth consumption of the viewport.}
Since the packet queue length decreases when the download time of video chunk is longer than its duration $T$ (i.e., the video consumption rate is faster than download speed), following \cite{360vid-ProbDASH}, the \textbf{packet queue occupancy} $Q_{t}^P=Q_{t-1}^P+\frac{T}{C_p}(1-\frac{s_{fov}(1-k_{dof}y_{dof})\gamma}{B_t})$.
{\color{c3}$Q_{t-1}^P$ is the current occupancy, and $\frac{T}{C_p}(1-\frac{s_{fov}(1-k_{dof}y_{dof})\gamma}{B_t})$ represents the amount of occupancy change.\footnote{The decreasing occupancy estimated by the download time is subtracted from the increasing occupancy $\frac{T}{C_p}$ evaluated by the video chunk duration.}} $y_{dof}\in\{0,1\}$ is a decision variable indicating whether DoF simulation is activated, and $k_{dof}$ is the ratio of the bandwidth consumption reduced by DoF simulation. Following \cite{Sickness-FoV-1,Sickness-FoV-2}, $s_{fov}\in[0.7,1]$ is a decision variable representing the ratio of the shrunken viewport size to the original size when activating FoV shrinking.

4) When the user enjoys VR over time, cybersickness accumulates and the length of the sickness queue increases due to the head rotation \cite{Sickness-DesignGuideline,Sickness-FoV-1} and optical flow \cite{Sickness-OpticalFlow}. However, the user may gradually adapt to VR as time elapses. Let $\Omega$ be the consumption rate of the sickness queue (i.e., user adaptation to VR).\footnote{\color{c2}According to \cite{Sickness-UserDifference}, the values of $\Omega$ and $C_s$ for each user can be determined based on age, gender, and familiarity with VR contents. For example, seniors and people that are not familiar with VR contents tend to have a lower tolerance of cybersickness, and smaller $\Omega$ and $C_s$ are required.}
Following \cite{Sickness-TempAspect,Sickness-UserAdaptation}, the \textbf{sickness queue occupancy} $Q_{t}^S$ is derived according to i) $Q_{t-1}^S$, ii) increasing occupancy caused by head rotation \cite{Sickness-DesignGuideline,Sickness-FoV-1} and optical flow \cite{Sickness-OpticalFlow}, and iii) decreasing occupancy $\frac{\Omega}{C_s}$. That is, $Q_{t}^S=Q_{t-1}^S+\frac{(\frac{\sqrt{(\omega^y_t)^2+(\omega^p_t)^2}}{100\sqrt{2}}+\frac{\Sigma_{i=1}^N \Sigma_{j=1}^L x_{i,j}p_{i,t}f_{i,j}}{\Sigma_{i=1}^N \Sigma_{j=1}^L x_{i,j}p_{i,t}})s_{fov}(1-k_{dof}y_{dof})}{C_s}-\frac{\Omega}{C_s}$,
where $C_s$ is the sickness queue capacity,\footnotemark[\value{footnote}] and the optical flow of each tile is weighted by its viewing probability. $\frac{\sqrt{(\omega^y_t)^2+(\omega^p_t)^2}}{100\sqrt{2}}$ is the normalized head rotation speed \cite{360vid-Flare}. $s_{fov}(1-k_{dof}y_{dof})$ describes the percentage of cybersickness decreased due to FoV shrinking and DoF simulation.\footnote{The cybersickness reduction is inversely proportional to the viewport size \cite{Sickness-FoV-2} and the degree of video blurring \cite{Sickness-DoF}.}


\ProbName\ has the following constraints. 1) \emph{Fetching constraint}. The system fetches only one level for each tile at any instant to avoid tile redundancy, i.e.,  $\Sigma_{j=1}^L  x_{i,j}\leq1,\forall i$. 2) \emph{Packet queue length constraint.} To prevent video rebuffering and control MTP delay, the queue occupancy $Q_{t}^P$ must be no less than the targeted queue occupancy $\lambda$, i.e., $Q_{t}^P \geq \lambda$ \cite{360vid-ProbDASH,360vid-TwoTier}.
By properly reformulating the above equation (detailed in \cite{tech}), we can transform the packet queue length constraint into the \emph{bandwidth constraint} for fetching tiles, i.e., $\Sigma_{i=1}^N \Sigma_{j=1}^L x_{i,j}b_{i,j}\leq \frac{B_t(C_p (Q_{t-1}^P-\lambda)+T)}{s_{fov}(1-k_{dof}y_{dof})T}$.
The objective function includes two parts: 1) video quality loss $\Phi_t$ and 2) cybersickness accumulation $Q^{S}_{t}$ (i.e., sickness queue occupancy).
The video quality deteriorates if the viewport distortion is larger \cite{360vid-ProbDASH} and the viewport size (narrowed down by FoV shrinking) becomes smaller \cite{Sickness-FoV-2}. Also, DoF simulation undermines video quality since it blurs the out-of-focus parts of video \cite{Sickness-DoF,Sickness-DesignGuideline}. Accordingly, following \cite{360vid-ProbDASH}, the video quality loss is $\Phi_t = \frac{D_t}{s_{fov}(1-k_{dof}y_{dof})}$.
Therefore, the objective is $\xi\Phi_{t}+\rho Q^{S}_{t}$, where $\xi$ and $\rho$ are tuning knobs to trade-off video quality against cybersickness.\footnote{\color{c2}For the videos with highly dynamic contents (e.g., action movies), we can set a larger $\rho$ to focus more on cybersickness since they contain a large amount of optical flow and aggravate cybersickness more easily. For static videos like dramas, we can focus more on video quality and set a larger $\xi$. 
}
\ProbName\ is formulated as follows. 

\begin{definition}[\ProbName]
Given the duration $T$ of each video chunk cropped into $N$ tiles with $L$ quality levels in a frame, video metadata ($b_{i,j}$,$d_{i,j}$,$f_{i,j}$), packet queue occupancy $Q_{t}^P$, sickness queue occupancy $Q_{t}^S$, and bandwidth $B_t$, 
\ProbName\ finds 1) the fetched tiles and their quality, 2) the size of viewport (configured by FoV shrinking), and 3) the activation of DoF simulation, under the 1) \emph{fetching}, 2) \emph{packet queue length}, and 3) \emph{bandwidth} constraints.
The objective is to minimize the cost function of video quality loss and cybersickness accumulation over time, i.e., $\min \sum_{t=1}^{\infty} \xi\Phi_{t}+\rho Q^{S}_{t}$.
\end{definition}

\section{Online Algorithm} \label{alg}
To minimize the video quality loss, an intuitive approach is to classify the tiles into multiple groups with different priorities according to estimated viewing probability, and a group is assigned a higher quality if it has a higher priority \cite{360vid-BAS,360vid-Flare,360vid-TwoTier}. However, the above approach does not use FoV shrinking and DoF simulation to alleviate cybersickness.
We design an $m$-competitive online algorithm \AlgName\ to minimize the long-term video quality loss and cybersickness accumulation by examining Cybersickness Indicator (CI), Video Loss Indicator (VLI), and Sickness Migration Indicator (SMI) in every time slot.
CI and VLI respectively evaluate the instant cybersickness and the video quality loss.
Equipped with CI and VLI, \AlgName\ assigns the tile quality by evaluating the cost of the joint configuration of quality assignment, FoV shrinking, and DoF simulation for each tile in the viewport. 
When the packet queue overflows, it tends to choose high-quality tiles (requiring a longer transmission time) to slow down packet queue growth, and the faster cybersickness accumulation (due to a larger magnitude of optical flow) also needs to be alleviated by FoV shrinking and DoF simulation. 
{\color{c3}Moreover, \AlgName\ introduces SMI to jointly adjust tile quality and control the growth of sickness queue by evaluating the accumulated cybersickness and the increased optical flow.}
\AlgName\ includes three phases: 1) Viewport Prediction and Tile Selection (\AlgStepOne), 2) Tile Quality Assignment (\AlgStepTwo), and 3) Cybersickness and Tile Quality Control (\AlgStepThree).
The pseudocode of \AlgName\ is presented in \cite{tech} due to the limited space.

\subsection{Viewport Prediction and Tile Selection (\AlgStepOne)}
To adapt to user head rotation, \AlgStepOne\ derives the viewing probability $p_{i,t+1}$ for all tiles by the following steps: 1) \AlgStepOne\ first predicts the most likely viewport position in the next $T$ seconds. 
Specifically, let $(\theta^y_t,\theta^p_t)$ be the viewport position at time slot $t$, where $\theta^y_t$ and $\theta^p_t$ are the horizontal (yaw) and vertical (pitch) angles of the viewport, respectively. 
\AlgStepOne\ evaluates the viewport position at time slot $t+1$ by applying $(\omega^y_t T)$\textdegree\ of yaw rotation and $(\omega^p_t T)$\textdegree\ of pitch rotation from the current viewport position. The predicted viewport position is $(\theta^y_{t}+\omega^y_t T,\theta^p_{t}+\omega^p_t T)$.
2) \AlgStepOne\ extracts adjacent viewports to ensure that the fetched tiles can fully cover the user's view during the head rotation, based on the Gaussian distributions of yaw-axis and pitch-axis with the standard deviations $\sigma_y$ and $\sigma_p$, respectively \cite{360vid-ProbDASH}.
3) For each tile $i$, \AlgStepOne\ sums the viewing probability $P(\theta^y_{t+1},\theta^p_{t+1})$ of each predicted viewport position $(\theta^y_{t+1},\theta^p_{t+1})$ that overlaps with tile $i$ to derive $p_{i,t+1}$. That is, $p_{i,t+1}=\sum_{(\theta^y_{t+1},\theta^p_{t+1})\in\Theta} P(\theta^y_{t+1},\theta^p_{t+1})$, where $\Theta$ is the set of all the predicted viewport positions that overlap with tile $i$. 
\AlgStepOne\ then constructs a set of fetched tiles $V=\{v\ |\ p_{v,t+1}>0\}$ and examines the boundary subsets of $V$ to identify the tiles unlikely to be seen. Let $G_V$ be the boundary subset of $V$ formed by a row (column) of tiles\footnote{Since the viewport, composed of multiple tiles, needs to be rectangular \cite{360vid-summary}, 
a row (column) of tiles in a set is defined as the tiles sharing the same horizontal (vertical) coordinate, such as tiles $1, 7, 13$ in Fig. \ref{fig:2}.} in $V$.
If $p_{g,t+1}<\epsilon, \forall g \in G_V$ holds, we remove $G_V$ from $V$, where $\epsilon$ is the probability threshold to discard tiles.\footnote{Following \cite{360vid-TwoTier,360vid-EPASS360}, a smaller $\epsilon$ allows the mobile proxy to transmit a larger viewport to avoid the interruption caused by head rotation.}
For example, in Fig. \ref{fig:2}, the boundary subset removed is $\{12,18\}$ and the fetched tiles are $V=\{10,11,16,17\}$. Due to the space constraint, the detailed calculation of this and the remaining examples are provided in \cite{tech}.


\begin{figure}[t]
\centering
\includegraphics[width=0.4\textwidth] {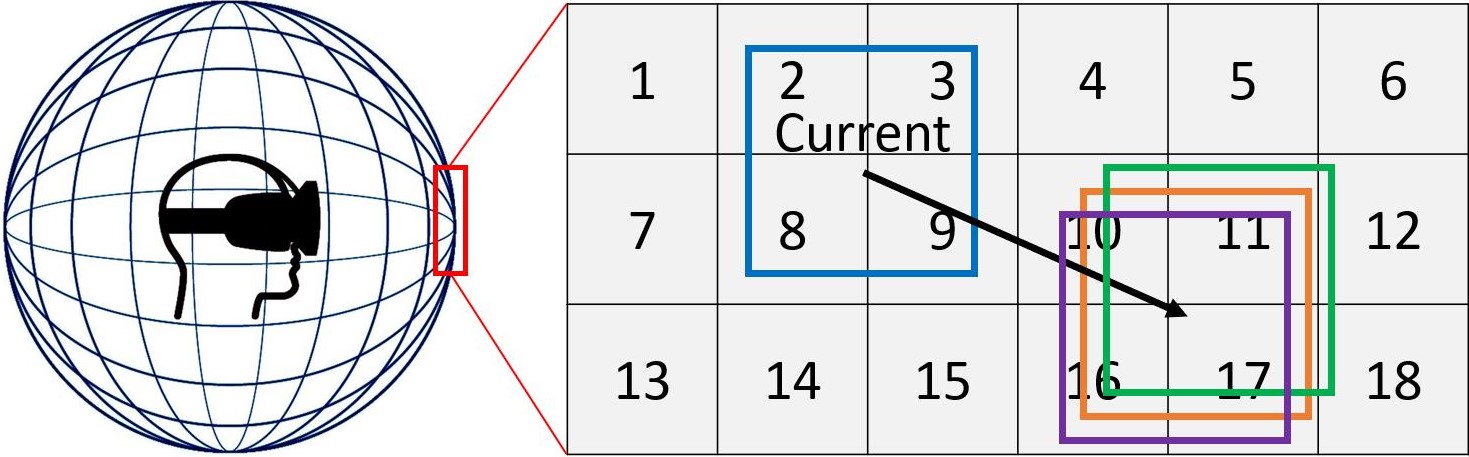}
\caption{An example of \AlgStepOne}
\label{fig:2}
\end{figure}

\subsection{Tile Quality Assignment (TQA)}
\AlgStepTwo\ assigns a quality level to each tile in $V$ based on the induced cost of the tile quality assignment 
and the bandwidth constraint. 
We introduce Cybersickness Indicator (CI) and Video Loss Indicator (VLI), weighted by the viewing probability $p_{i,t+1}$, to approximately evaluate the cybersickness and video quality loss.
{\color{c3}
A tile with more visual information has a larger size and more optical flow, which induces cybersickness. To alleviate cybersickness, FoV shrinking and DoF simulation reduce the size and resolution of the viewport such that the visual information (video size) decreases.
Therefore, we define $CI = \frac{f_{i,j}s_{fov}(1-k_{dof}y_{dof}) p_{i,t+1}}{\Sigma_{n\in V} (p_{n,t+1})}$ to measure the cybersickness induced by each tile regarding the effect of optical flow $f_{i,j}$, FoV shrinking $s_{fov}$, and DoF simulation $(1-k_{dof}y_{dof})$.}

{\color{c3}The video quality becomes worse when the distortion is larger and the viewport size and resolution are smaller (configured by FoV shrinking and DoF simulation). Hence, the video quality drops by the reciprocal of $s_{fov}(1-k_{dof}y_{dof})$.
We define $VLI=\frac{p_{i,t+1}d_{i,j}}{s_{fov}(1-k_{dof}y_{dof})\Sigma_{n\in V} (p_{n,t+1})}$ to measure the video quality loss induced by each tile regarding the effect of distortion $d_{i,j}$, FoV shrinking $s_{fov}$, and DoF simulation ($1-k_{dof}y_{dof}$).
}
Therefore, the cost of assigning quality $j$ to tile $i$ in $V$ with a certain configuration of FoV shrinking and DoF simulation $(s_{fov},y_{dof})$\footnote{We iteratively examine each configuration of $(s_{fov},y_{dof})$, and the number of iterations is small since $y_{dof}\in \{0,1\}$ and there are usually $7$ levels of $s_{fov}$ \cite{Sickness-FoV-1,Sickness-FoV-2}.} can be evaluated by $\tau(i,j,s_{fov},y_{dof})=\xi VLI(i,j,s_{fov},y_{dof}) + \rho CI(i,j,s_{fov},y_{dof})$. The function $\tau$ trades off the video quality against cybersickness.



The optimal quality of each tile may not be identical because the distortion and magnitude of optical flow of each tile vary. A tile with more visual information has a larger difference in the magnitude of optical flow between adjacent quality levels. Therefore, to jointly minimize the video quality loss and cybersickness accumulation, our idea is to iteratively derive the dynamic programming state of each quality assignment with the joint configuration of FoV shrinking and DoF simulation.
Let $M(n,\beta)$ be the minimum total cost of the quality assignment $A_{n,\beta}$ for the first $n$ tiles in $V$ under the bandwidth constraint $\beta$.
\AlgStepTwo\ systematically derives $M(n,\beta)$ according to the following two cases. 
1) When $n=0$ (i.e., no tile needs to be assigned quality), the total cost is $0$.
2) When $n>0$, \AlgStepTwo\ considers different quality assignments, which consume bandwidth no more than $\beta$, for tile $v_n \in V$. 
For each tile $v_n$, \AlgStepTwo\ recursively examines $M(n-1,\beta-b_{v_n,j})$, which is the minimum total cost of the quality assignment before assigning quality to $v_n$. Then, it chooses the minimum $M(n-1,\beta-b_{v_n,j})+\tau(v_n,j,s_{fov},y_{dof})$ to minimize the video quality loss and cybersickness accumulation.
Accordingly,
\begin{equation} \label{DP recursive function}
M(n,\beta)=
\begin{cases}
0, & \text{if } n=0 \\
\min\{M(n-1,\beta-b_{v_{n},j})+\tau\}, & \forall\ b_{v_{n},j}\leq \beta.
\end{cases}
\end{equation}

Recall that the total bandwidth for fetching tiles cannot exceed $\frac{B_t(C_p (Q_{t-1}^P-\lambda)+T)}{s_{fov}(1-k_{dof}y_{dof})T}$ to control the growth of the packet queue (detailed in Section \ref{pro}), and we let $\mathcal{B} = \frac{B_t(C_p (Q_{t-1}^P-\lambda)+T)}{s_{fov}(1-k_{dof}y_{dof})T}$. Hence, we can obtain the minimum total cost of the quality assignment for tile set $V$ by calculating {\color{c3}$M(n=|V|,\beta=\mathcal{B})$}.
Based on (\ref{DP recursive function}), \AlgStepTwo\ extracts a quality assignment to minimize the video quality loss and cybersickness accumulation. 
\AlgStepTwo\ first constructs a table
to store the bandwidth of each tile. 
Specifically, \AlgStepTwo\ stores the bandwidth $b_{v_n,j}$ of tile $v_n$ into the table with a label $(n,\beta)$ when it examines $M(n,\beta)=M(n-1,\beta-b_{v_n,j})+\tau(v_n,j,s_{fov},y_{dof})$ during the recursive process.
Afterward, \AlgStepTwo\ iteratively queries the table to construct $A_{n,\beta}$ since the bandwidth of each tile can be mapped to a certain quality and \AlgStepTwo\ can obtain the bandwidth of $v_{n-1}$ by querying the value labeled $(n-1,\beta-b_{v_n,j})$.


\textbf{Example 1.} Recall that we have obtained $V=\{10,11,16,17\}$ in \AlgStepOne. 
Assume that $B_t=8$, $C_p=4$, $Q_{t-1}^P=0.65$, $\lambda=0.5$, $T=1$, $k_{dof}=0.1$ and all possible configurations of $(s_{fov},y_{dof})$ are $(1,0)$, $(1,1)$, $(0.7,0)$, and $(0.7,1)$, which results in $\mathcal{B} = 13$, $14$, $18$, and $20$, respectively.
When $\mathcal{B}=13$, let $\{\tau,b_{i,1}\}=\{8,1\}$, $\{\tau,b_{i,2}\}=\{4,2\}$, $\{\tau,b_{i,3}\}=\{2,3\}$, and $\{\tau,b_{i,4}\}=\{1,4\},\forall i$. Then, $M(4,13)=\min\{M(3,9)+1,M(3,10)+2,M(3,11)+4,M(3,12)+8\}$. After the recurrence, the quality assigned to the tile set $\{10,11,16,17\}$ is $\{4,3,3,3\}$, since 
it induces the minimum total cost $1+2+2+2=7$. Similarly, for $\mathcal{B}=14$, $18$, and $20$, we have $\{4,4,3,3\}$, $\{4,4,4,4\}$, and $\{4,4,4,4\}$ as the quality levels of tile set $\{10,11,16,17\}$, respectively.

\subsection{Cybersickness and Tile Quality Control (CTQC)}
{\color{c3}To alleviate cybersickness accumulated over time, \AlgStepThree\ iteratively adjusts the tile quality assignment to slow down the growth of the sickness queue. When the cybersickness has accumulated massively, even slightly increasing optical flow can lead to the sickness queue overflow (i.e., the user will immediately quit watching 360\textdegree\ videos since the symptoms of cybersickness become unbearable \cite{Sickness-TempAspect,Sickness-UserAdaptation}). 
On the other hand, when the sickness queue occupancy is small, higher tile quality (which leads to more optical flow) can be used to reduce the video quality loss.
}
{\color{c3}Therefore, we introduce the Sickness Migration Indicator $SMI=\sum_{i\in V, j\in \Delta_V} (Q_{t-1}^S(f_{i,j}(t+1)-f_{i,j}(t))+VLI)$ integrated with VLI to jointly examine cybersickness and video quality, where $\Delta_V$ is the set of quality levels assigned to tile set $V$, $f_{i,j}(t+1)$ is the optical flow of tile $(i,j)$ at time slot $t+1$ (obtained from the video metadata). SMI evaluates the impact of accumulated cybersickness ($Q_{t-1}^S$) and future optical flow $f_{i,j}(t+1)$ on the tile quality assignment, and $f_{i,j}(t+1)-f_{i,j}(t)$ increases when $j$ becomes larger \cite{Sickness-OpticalFlow}.} 

To adjust the tile quality, we define $\Delta_V'$, a \emph{neighbor} of $\Delta_V$, as a quality assignment (satisfying the bandwidth constraint) that only one quality is different from $\Delta_V$ and the level difference is $1$. For example, $\Delta_V'$ is a neighbor of $\Delta_V=\{\ell_1,\ell_2\}$ if $\Delta_V'=\{\ell_1\pm 1,\ell_2\}$ or $\{\ell_1,\ell_2\pm 1\}$.
In the beginning, we let $\Delta_V$ be the quality assignment obtained in \AlgStepTwo\ and regard it as the \emph{center set} $\Delta_C$ for searching its neighbors to adjust tile quality. To avoid re-examining the same assignment in the next few iterations and improve the searching efficiency, \AlgStepThree\ maintains a neighbor search list (NSL) and inserts $\Delta_C$ into it.\footnote{The size of NSL can be adjusted based on the number of neighbors.}
\AlgStepThree\ first finds the neighbor $\Delta'_C$ (of the \emph{center set} $\Delta_C$), which induces the minimum SMI among the neighbors and is not in the NSL, as the \emph{center set} in the next iteration. Then, \AlgStepThree\ inserts $\Delta'_C$ into the NSL to avoid re-evaluating it in the next few iterations. If the list is full, \AlgStepThree\ will remove the oldest assignment. If $\Delta'_C$ induces a smaller SMI than $\Delta_V$, \AlgStepThree\ will replace $\Delta_V$ by $\Delta'_C$.
\AlgStepThree\ iteratively examines the above process until a certain quality assignment has been examined for $\alpha$ times, where $\alpha$ can be adjusted to prevent excessive computation time. 

\textbf{Example 2.} 
Following Example 1, when $\mathcal{B}=13$, $14$, $18$, and $20$, we have $\{4,3,3,3\}$, $\{4,4,3,3\}$, $\{4,4,4,4\}$, and $\{4,4,4,4\}$ as the quality levels of tile set $V=\{10,11,16,17\}$, respectively. Let $\Delta_V=\{4,3,3,3\}$ and its neighbor $\Delta_V'=\{3,3,3,3\}$. When $(s_{fov},y_{dof})=(1,0)$, the result of \AlgStepThree\ is $\Delta_V'$ since it has the minimum SMI. As for the other $(s_{fov},y_{dof})$ pairs, the quality assignments obtained from \AlgStepTwo\ remain the same in \AlgStepThree.

Last, \AlgName\ repeats \AlgStepTwo\ and \AlgStepThree\ with all possible FoV shrinking and DoF simulation configurations $(s_{fov},y_{dof})$ to minimize the video quality loss and cybersickness accumulation by selecting the tile quality assignment inducing the minimum $\xi\Phi_{t}+\rho Q^{S}_{t}$.
Following Example 2, after repeating \AlgStepTwo\ and \AlgStepThree\ for $(s_{fov},y_{dof})=(1,0)$, $(1,1)$, $(0.7,0)$, and $(0.7,1)$, 
we choose $\{4,4,4,4\}$ with $(s_{fov},y_{dof})=(0.7,0)$ since it induces the minimum $\xi\Phi_{t}+\rho Q^{S}_{t}$. 


\begin{theorem}
\label{theorem1}
    \AlgName\ is $\frac{1}{s_{min}(1-k_{dof})r}$-competitive in $O(NB+N^2)$ time, where $r$ is the minimum SSIM of all tiles, $s_{min}$ is the minimum value of $s_{fov}$, and $B=\max_{\forall t}\{B_t\}$.
\end{theorem}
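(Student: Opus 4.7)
The plan is to prove Theorem~\ref{theorem1} in two parts: (i) a per-slot competitive-ratio bound on the objective $\xi\Phi_t+\rho Q^S_t$ summed over all $t$, and (ii) a worst-case analysis of the running time of each phase of \AlgName.

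For part (i), I would first bound the quality-loss ratio $\Phi_t^{ALG}/\Phi_t^{OPT}$. Recall $\Phi_t=D_t/(s_{fov}(1-k_{dof}y_{dof}))$ and that each tile's distortion is the reciprocal of its SSIM. Since SSIM lies in $[r,1]$ by definition of $r$, the viewing-probability-weighted viewport distortion satisfies $1\le D_t\le 1/r$; in particular $D_t^{ALG}\le 1/r$ and $D_t^{OPT}\ge 1$. Meanwhile, the scaling denominator $s_{fov}(1-k_{dof}y_{dof})$ is at most $1$ for any policy and at least $s_{min}(1-k_{dof})$ for \AlgName's chosen configuration, since \AlgName\ iterates over all admissible pairs. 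Combining these two bounds yields $\Phi_t^{ALG}/\Phi_t^{OPT}\le 1/(r\,s_{min}(1-k_{dof}))$.

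For the sickness-queue term I would use the per-slot recurrence derived in Section~\ref{pro}: the increment is linear in $s_{fov}(1-k_{dof}y_{dof})$, and the head-rotation contribution is identical across policies given the same input trace $R_t$. Upper-bounding \AlgName's increment by scale $1$ and lower-bounding OPT's by $s_{min}(1-k_{dof})$, the cumulative ratio $Q^{S,ALG}_t/Q^{S,OPT}_t$ telescopes to at most $1/(s_{min}(1-k_{dof}))$, which is at most $1/(r\,s_{min}(1-k_{dof}))$ since $r\le 1$. Combining the two component bounds under the additive objective gives the claimed competitive ratio. For part (ii) I would sum the work of each phase, using that the set of $(s_{fov},y_{dof})$ configurations has constant size: \AlgStepOne\ builds $V$ and prunes boundary rows/columns in $O(N)$; \AlgStepTwo\ fills the DP table of size $|V|\times\mathcal{B}=O(NB)$ with $O(1)$ work per cell and recovers $A_{n,\beta}$ by an $O(N)$ back-trace, for $O(NB)$ total; \AlgStepThree\ explores at most $O(N)$ neighbors per iteration and terminates after a constant $\alpha$ revisits, giving $O(N^2)$. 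Summing yields $O(NB+N^2)$.

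The main obstacle is the sickness-queue analysis, because $Q^S_t$ is history-dependent: OPT may shrink FoV more aggressively early on to suppress accumulation, so a naive slot-by-slot comparison need not respect the cumulative ratio. I would handle this by exploiting that \AlgName\ exhaustively examines every $(s_{fov},y_{dof})$ configuration per slot and, within each, invokes the optimal quality assignment produced by \AlgStepTwo's DP; this ensures its per-slot increment is never worse than OPT's up to the worst-case FoV/DoF scale factor $1/(s_{min}(1-k_{dof}))$. The shared input $R_t$ and the monotonicity of the sickness recurrence in the scale factor then let the telescoping argument go through without having to explicitly couple the packet-queue dynamics into the sickness-queue analysis.
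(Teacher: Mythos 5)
The paper does not actually contain a proof of Theorem~\ref{theorem1} --- it is deferred to the technical report \cite{tech} --- so I can only assess your proposal on its own terms. Your treatment of the video-quality term is plausible and almost certainly captures where the factor $\frac{1}{s_{min}(1-k_{dof})r}$ comes from: $D_t$ is a convex combination of reciprocals of SSIM values, hence lies in $[1,1/r]$, and the FoV/DoF denominator lies in $[s_{min}(1-k_{dof}),1]$. The running-time accounting is also reasonable (treating $L$, $\alpha$, and the number of $(s_{fov},y_{dof})$ configurations as constants), although you should state explicitly that the $O(1)$ work per DP cell hides a minimization over $L$ quality levels.

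The genuine gap is in the sickness-queue term. First, the per-slot increment of $Q_t^S$ is not just ``head rotation times a scale factor'': it contains the probability-weighted optical flow $\frac{\sum_{i,j}x_{i,j}p_{i,t}f_{i,j}}{\sum_{i,j}x_{i,j}p_{i,t}}$, which depends on which tiles and qualities each policy fetches. Your claim that the increments of \AlgName\ and OPT differ only by the FoV/DoF scale factor silently assumes both policies incur the same optical-flow term, which is false in general; you need an additional bound (e.g., $f_{max}/f_{min}$ or a link between optical flow and SSIM) or the stated ratio does not follow. Second, the increment is \emph{net} of the adaptation term $\Omega/C_s$: if OPT's gross increment is $a$ and \AlgName's is $b\le\kappa a$, the ratio $(b-\Omega/C_s)/(a-\Omega/C_s)$ is not bounded by $\kappa$ and can be arbitrarily large (or undefined) when OPT's occupancy is driven near zero, so the telescoping step as written fails. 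Third, you acknowledge but do not resolve the history coupling: \AlgName's feasible set at slot $t$ is constrained by $\mathcal{B}$, which depends on its own $Q_{t-1}^P$, and its per-slot greedy choice minimizes $\tau$ (a surrogate of VLI and CI), not the true objective $\xi\Phi_t+\rho Q_t^S$ against OPT's state. Asserting that ``the per-slot increment is never worse than OPT's up to the scale factor'' is precisely the inductive invariant that needs to be proved, and as stated it is not implied by the exhaustive search over $(s_{fov},y_{dof})$ alone.
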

\begin{proof}
The proof is presented in \cite{tech} due to the limited space.
\end{proof}
\vspace{-5mm}
\begin{theorem}
\label{theorem2}
\AlgName\ is $m$-competitive if $L\rightarrow\infty$.
\end{theorem}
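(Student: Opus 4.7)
The plan is to refine the competitive ratio of Theorem~\ref{theorem1}, namely $\frac{1}{s_{min}(1-k_{dof})r}$, under the limit $L\to\infty$. The three factors $s_{min}$, $(1-k_{dof})$, and $r$ arise in the proof of Theorem~\ref{theorem1} as worst-case slacks that absorb the discrepancy between \AlgName's discrete per-tile decisions (quality level, FoV shrinking ratio, DoF activation) and the finer-grained choices available to OPT. As $L\to\infty$, the quality grid densifies and this slack should collapse, leaving only the tile-set mismatch captured by $m$.

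Concretely, I would first inspect how $r$ enters the Theorem~\ref{theorem1} proof: it upper-bounds the per-tile distortion ratio between \AlgName\ and OPT by $\frac{1}{r}$. With infinitely many quality levels, the dynamic-programming recursion~(\ref{DP recursive function}) in \AlgStepTwo\ can match OPT's distortion on every fetched tile up to arbitrary precision, so the factor $\frac{1}{r}\to 1$. Because \AlgStepThree's outer loop already enumerates every $(s_{fov}, y_{dof})$ pair, a parallel density argument collapses the factor $\frac{1}{s_{min}(1-k_{dof})}$ to $1$: \AlgName\ can always pick the same FoV/DoF configuration as OPT without paying a rounding penalty. Consequently, the only surviving source of sub-optimality is the difference between \AlgName's predicted tile set $V$ (produced by \AlgStepOne) and OPT's clairvoyant tile set.

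Since both $\Phi_t$ and $Q^{S}_t$ are probability-weighted sums over the fetched tiles, this tile-set gap transfers linearly into the cost ratio; by the definition of $m$ as the ratio between the number of tiles fetched by \AlgName\ and by the optimal policy, this yields the $m$-competitive claim. The hard part will be formalizing the limit argument: one must show that $M(|V|, \mathcal{B})$ converges uniformly in $t$ to its continuous-quality counterpart, and that this convergence is preserved under the coupled queue dynamics $Q^{P}_t$ and $Q^{S}_t$. Because both queue updates depend Lipschitz-continuously on the fetched bandwidth and on the weighted optical flow, standard continuity-based arguments should close the gap and yield the $m$-competitive ratio in the $L\to\infty$ limit.
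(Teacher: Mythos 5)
The paper itself defers the proof of Theorem~\ref{theorem2} to its technical report, so there is no in-paper argument to compare against; your proposal must be judged on its own terms, and as written it has concrete gaps. The most serious one concerns the factor $\frac{1}{s_{min}(1-k_{dof})}$. This factor measures the worst-case mismatch between the FoV-shrinking/DoF configuration chosen online and the one OPT would choose (note that $\Phi_t = D_t/(s_{fov}(1-k_{dof}y_{dof}))$, so the extreme configurations differ by exactly this factor). The configuration space $(s_{fov},y_{dof})$ is finite and entirely independent of $L$ --- roughly seven levels of $s_{fov}$ and two of $y_{dof}$ --- so ``a parallel density argument'' has nothing to densify. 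Letting $L\rightarrow\infty$ refines the quality grid but leaves this term untouched; you would need a separate argument for why it disappears (for instance, that with a continuum of quality levels the algorithm can always realize OPT's optical-flow/distortion trade-off by quality selection alone and hence never needs a different $(s_{fov},y_{dof})$ than OPT), and you do not supply one. A similar objection applies to $r$: it is defined as the minimum SSIM over all tiles, a property of the encoded video that does not improve as $L$ grows. Your reading of $r$ as a per-tile rounding penalty that vanishes under grid refinement is a guess about the internals of the Theorem~\ref{theorem1} proof, not something you establish.

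The final step is also too quick. $D_t$ and the optical-flow term in $Q_t^S$ are probability-\emph{normalized} averages (both carry $\Sigma_{i}\Sigma_{j}x_{i,j}p_{i,t}$ in the denominator), so fetching $m$ times as many tiles does not scale these quantities linearly by $m$. The factor $m$ more plausibly enters through the bandwidth constraint: spreading $\mathcal{B}$ over $m$ times as many tiles forces each tile to a lower bitrate, and one must then bound how much distortion this costs relative to OPT --- which is precisely where the $L\rightarrow\infty$ assumption should be doing work, by letting the bitrate--distortion trade-off be treated as continuous. Your closing appeal to ``standard continuity-based arguments'' for the coupled queue dynamics is a placeholder rather than a proof; the sickness-queue update is not obviously benign under this limit, because the per-slot configuration choices feed back into $Q_t^S$ and hence into all future decisions. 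The overall strategy --- isolate the discretization penalties in the Theorem~\ref{theorem1} ratio and show they vanish as $L\rightarrow\infty$, leaving only the tile-count ratio $m$ --- is a sensible skeleton, but each of the three collapse claims needs its own justification, and at least one of them (the $s_{min}(1-k_{dof})$ term) does not follow from the argument you give.
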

\begin{proof}
The proof is presented in \cite{tech} due to the limited space.
\end{proof}

\vspace{-4mm}
\section{Simulations} \label{sim}
Due to the space constraint, some results and the user study are presented in \cite{tech}. 
\subsection{Simulation Setup}
We evaluate the performance of \AlgName\ with the video metadata extracted from 4K 360\textdegree\ videos \cite{dataset}.
The network bandwidth traces are extracted from the dataset HSDPA \cite{band}, and the mean value of the bandwidth is scaled from 3Mbps to 11Mbps \cite{360vid-EPASS360}. Following \cite{360vid-ProbDASH,360vid-EPASS360}, we cut the video into multiple chunks with the same duration of 1 second and split each chunk spatially into $6\times8$ tiles in default. Each tile includes 5 quality levels \cite{360vid-ProbDASH,360vid-Flare,360vid-EPASS360} and is encoded with H.264 using Constant Rate Factor (CRF) 43, 38, 33, 28, and 23 \cite{360vid-Flare} for level 1 to 5, respectively. We exploit the viewport prediction technique in \cite{360vid-ProbDASH} by formulating the prediction error with Gaussian distributions. The original viewport size is set to 100\textdegree$\times$100\textdegree in default \cite{360vid-EPASS360}. Following \cite{360vid-ProbDASH}, the capacity of packet queue is set to $C_p=4$s and the targeted queue occupancy is set to $\lambda=0.5$. $k_{dof}$ is set to 0.1 based on the result of applying DoF simulation to the videos \cite{Sickness-DoF,Sickness-DesignGuideline}. For the sickness queue, we follow \cite{Sickness-TempAspect} to set the capacity to $C_s=1000$ with the user adaptation to VR set to $\Omega=0.05$ in default. 
The default settings of $\xi$ and\ $\rho$ are 1 and 2.5, respectively. 
We compare \AlgName\ with three tile selection algorithms for 360\textdegree\ video streaming systems: 1) EPASS360 \cite{360vid-EPASS360}, 2) BAS \cite{360vid-BAS}, and 3) Flare \cite{360vid-Flare}. To evaluate \AlgName, we change the following parameters: 1) network bandwidth and 2) user adaptation to VR ($\Omega$), and we measure the performance metrics: 1) total cost, 2) video quality loss, 3) sickness queue occupancy, and 4) SSIM. Each result is averaged over 100 samples, and each sample includes 1,000+ video downloads.

\begin{figure}[t]
\centering
\subfloat[]{
    \centering
    \includegraphics[width=0.23\textwidth,height=2.5cm] {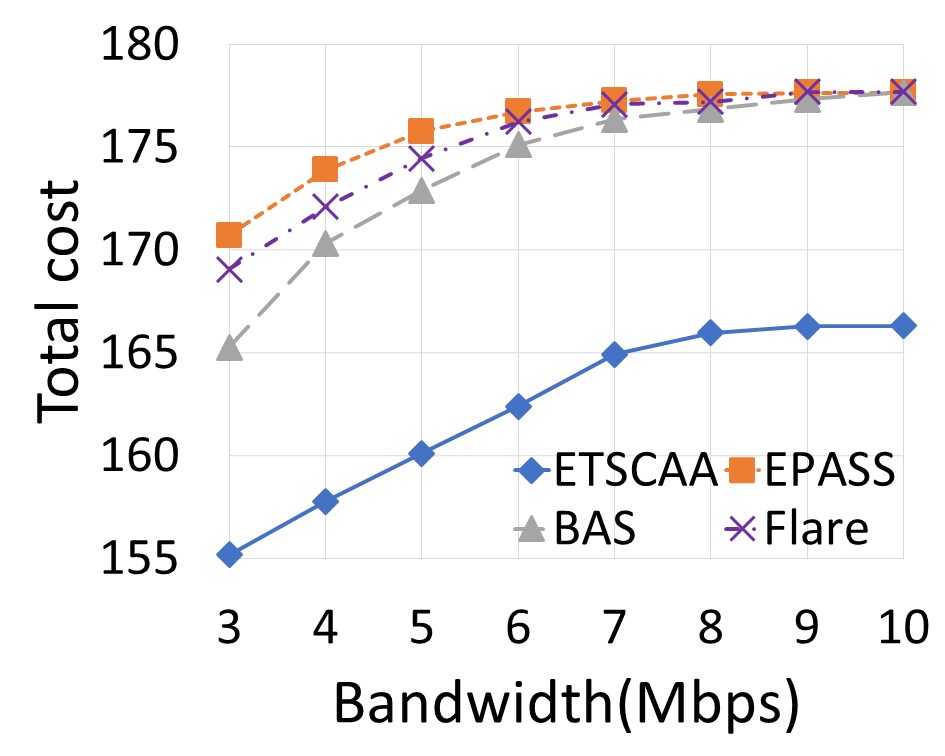}
    \label{fig:3-1}
}
\subfloat[]{
    \centering
    \includegraphics[width=0.23\textwidth,height=2.5cm] {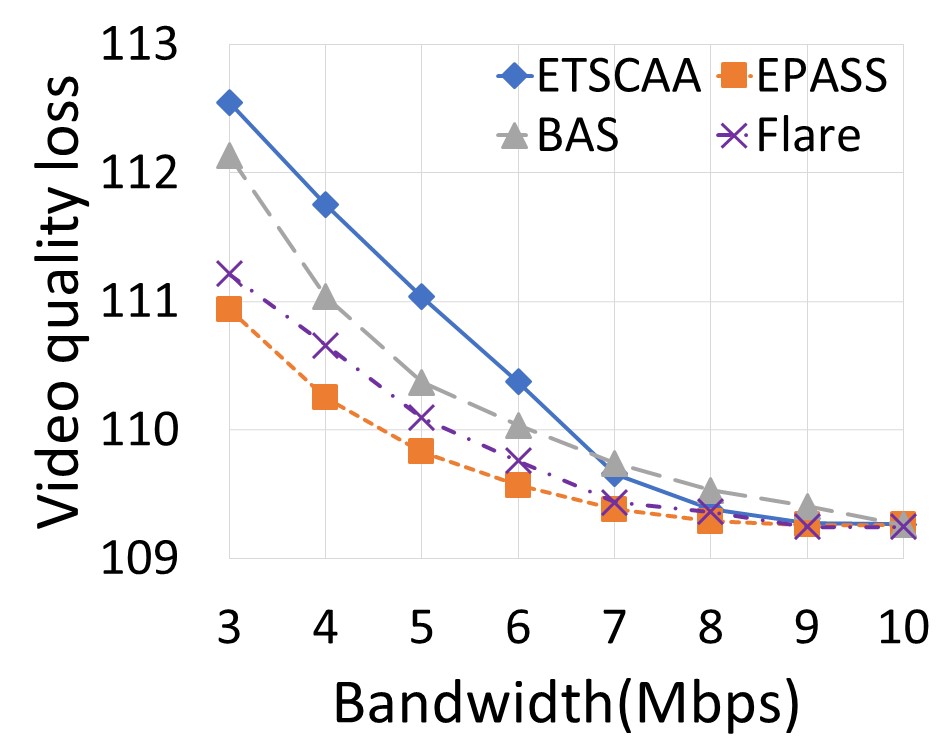}
    \label{fig:3-2}
}
\vspace{-3mm}
\subfloat[]{
    \centering
    \includegraphics[width=0.23\textwidth,height=2.5cm] {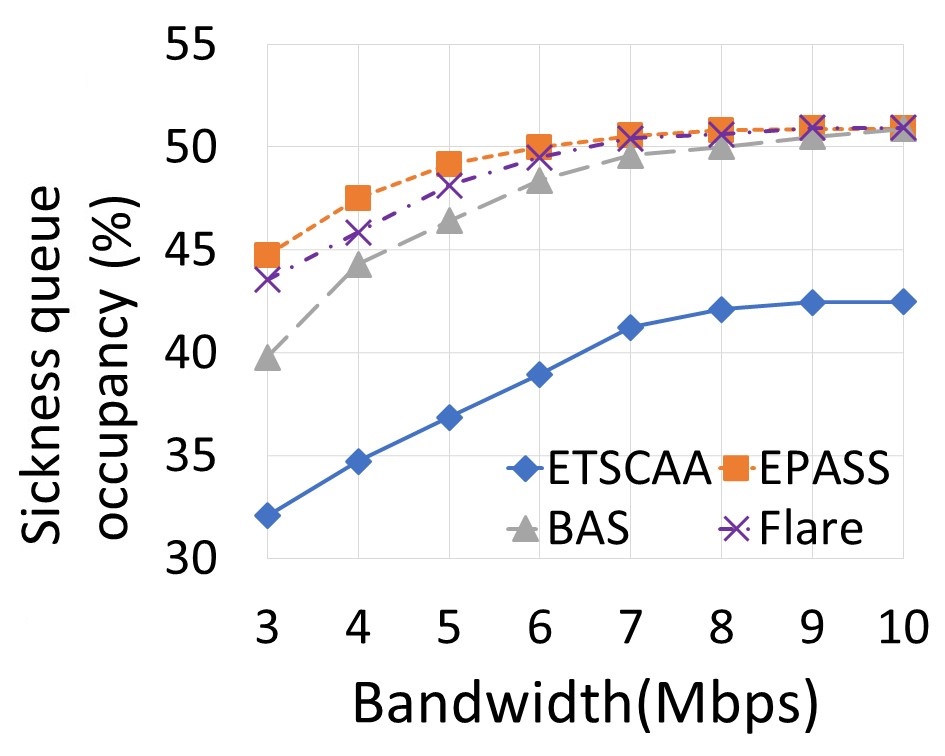}
    \label{fig:3-3}
}
\subfloat[]{
    \centering
    \includegraphics[width=0.23\textwidth,height=2.5cm] {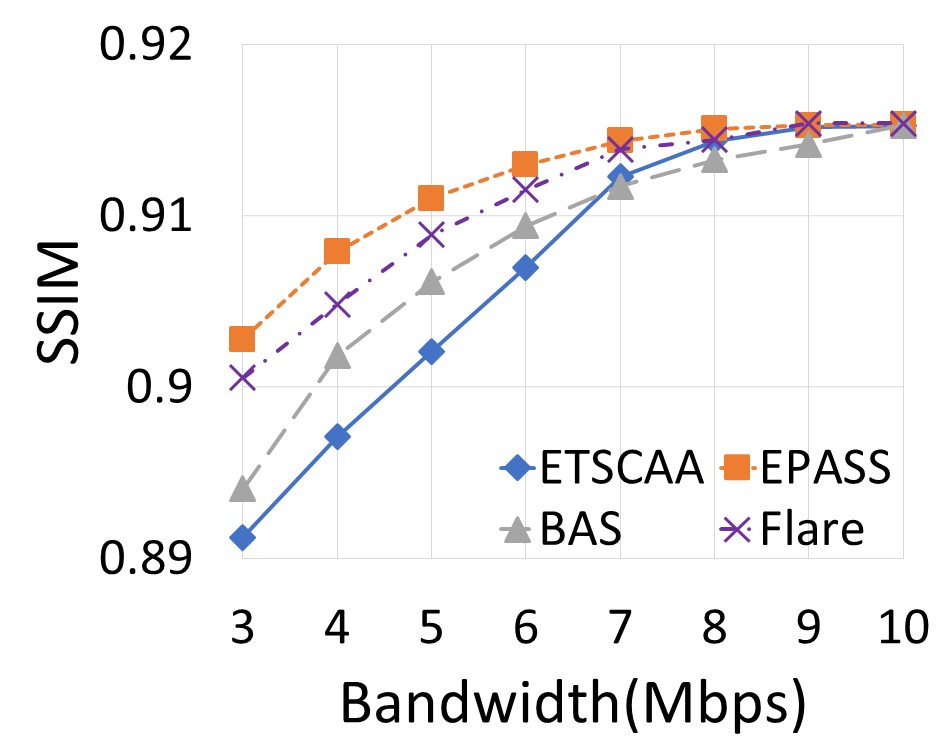}
    \label{fig:3-4}
}
\vspace{-3mm}
\subfloat[]{
    \centering
    \includegraphics[width=0.23\textwidth,height=2.5cm] {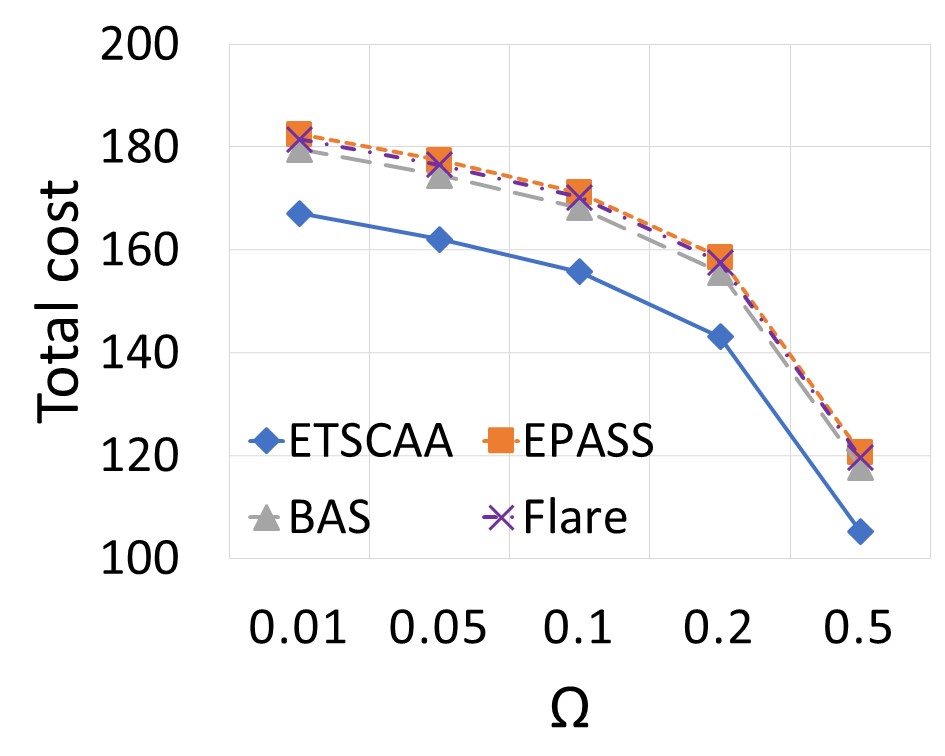}
    \label{fig:5-1}
}
\subfloat[]{
    \centering
    \includegraphics[width=0.23\textwidth,height=2.5cm] {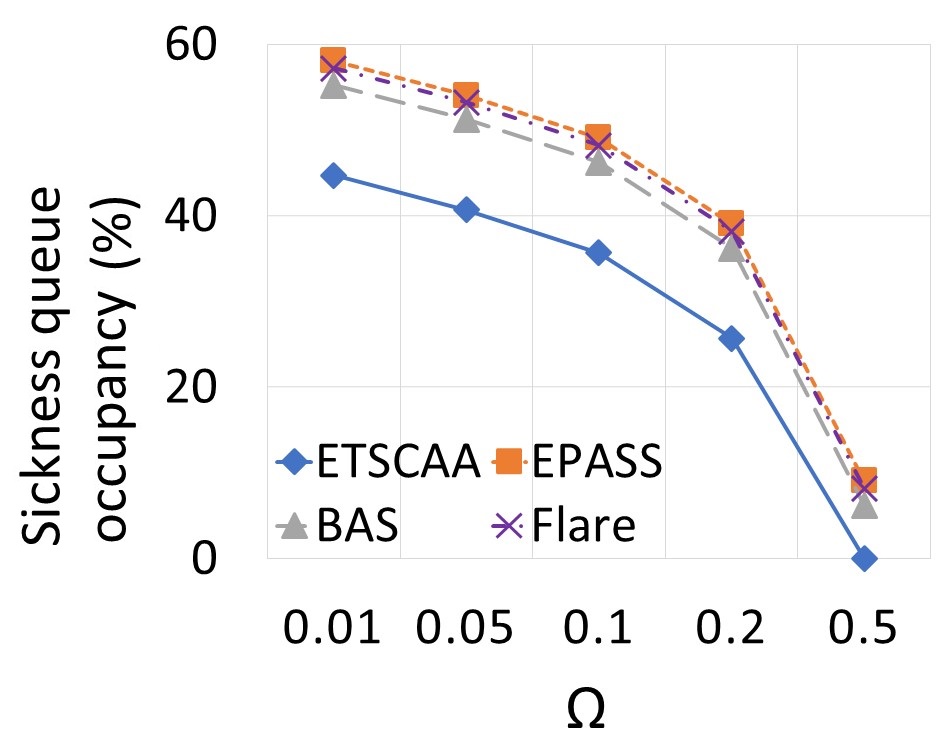}
    \label{fig:5-2}
}

\caption{Performance evaluation with different parameters.}
\label{fig:3}
\end{figure}

\subsection{Simulation Results}
In Fig. \ref{fig:3}(a), the total cost increases as the bandwidth grows since more tiles with higher quality (i.e., less distortion) are chosen, which leads to more accumulated cybersickness (see Fig. \ref{fig:3}(c)). 
{\color{c2}When the bandwidth is sufficient, \AlgName\ tends to choose high-quality tiles to reduce the video quality loss. 
However, it examines CI and VLI to activate FoV shrinking and DoF simulation to slightly mask and blur the outer part of the video for cybersickness alleviation. Therefore, \AlgName\ generates a much smaller sickness queue occupancy by degrading only a little video quality in Fig. \ref{fig:3}(c).}
{\color{c2}Comparing Figs. \ref{fig:3}(b) and \ref{fig:3}(c), the baselines lead to a smaller video quality loss but much higher sickness queue occupancy than {\AlgName}, because they maximize video quality by choosing the tiles with higher quality and viewing probability. 
In contrast, \AlgName\ exploits CI and VLI to carefully examine FoV shrinking and DoF simulation at different tiles, and they only sacrifice a little video quality (no more than $0.01$ SSIM \cite{SSIM-1}) in Fig. \ref{fig:3}(d) to achieve better cybersickness alleviation in Fig. \ref{fig:3}(c). 
}
Figs. \ref{fig:3}(e) and \ref{fig:3}(f) explore the impact of user adaptation to VR (i.e, $\Omega$).
The total cost and sickness queue occupancy plummet as $\Omega$ increases since users can adapt to VR more easily. 
{\color{c3}In contrast to the baselines maximizing the video quality, \AlgName\ evaluates the change of sickness queue occupancy over time by SMI to deal with the trade-off between cybersickness and video quality. 
Moreover, it exploits NSL to avoid re-evaluating recent tile quality configurations to ensure a better quality assignment in the next iteration. In general, \AlgName\ results in a smaller total cost and can reduce the cybersickness by more than $25\%$ compared to the baselines.}

\section{Conclusion} \label{con}
This paper leverages FoV shrinking and DoF simulation to alleviate cybersickness in a 360\textdegree\ video streaming system.
We formulate a new optimization problem \ProbName\ to minimize the video quality loss and cybersickness accumulation. To effectively solve \ProbName, we propose an $m$-competitive online algorithm \AlgName\ with CI, VLI, and SMI to support tile selection and cybersickness control. Simulations with a real network dataset demonstrate that \AlgName\ can reduce more than 25\% of sickness queue occupancy against the state-of-the-art tile selection algorithms for 360\textdegree\ video streaming systems.
\vspace{-4mm}

\ifCLASSOPTIONcompsoc
  \section*{Acknowledgments}
\else
  \section*{Acknowledgment}
\fi

We thank to National Center for High-performance Computing (NCHC) of National Applied Research Laboratories (NARLabs) in Taiwan for providing computational and storage resources.

\bibliographystyle{IEEEtran}
\bibliography{9-Reference}

\end{document}